\documentclass[a4paper,USenglish,cleveref, autoref, thm-restate,nolineno]{lipics-v2021}

\title{Coloring for dispersion: A polynomial-time algorithm for cardinality-constrained 2-anticlustering} 

\author{Nguyen Khoa Tran\footnote{Corresponding author}}{Department of Computer Science, Heinrich Heine University D\"usseldorf, Germany \and Center for Digital Medicine, D\"usseldorf, Germany}{nguyen.tran@hhu.de}{https://orcid.org/0000-0002-4732-4294}{}

\author{Lin Mu}{Department of Computer Science, Heinrich Heine University D\"usseldorf, Germany}{lin.mu@hhu.de}{}{}

\author{Martin Papenberg}{Department of Experimental Psychology, Heinrich Heine University D\"usseldorf, Germany \and Center for Digital Medicine, D\"usseldorf, Germany}{martin.papenberg@hhu.de}{https://orcid.org/0000-0002-9900-4268}{}

\author{Gunnar W. Klau}{Department of Computer Science, Heinrich Heine University D\"usseldorf, Germany \and Center for Digital Medicine, D\"usseldorf, Germany}{gunnar.klau@hhu.de}{https://orcid.org/0000-0002-6340-0090}{}

\authorrunning{N.\,K. Tran et al.} 

\Copyright{Nguyen Khoa Tran, Lin Mu, Martin Papenberg, and Gunnar W. Klau} 

\ccsdesc[100]{Theory of computation~Graph algorithms analysis} 

\keywords{maximum dispersion, equitable 2-coloring, cardinality-constrained bipartition, subset sum} 

\category{} 

\relatedversion{} 

\acknowledgements{We would like to thank Michael Brusco for raising the question that motivated this
  work in his signed review of \cite{papenbergExtendingBicriterionApproach2025a}. Furthermore, we would like to thank Philipp Spohr, Sara K\"aufler, Jonathan Bobak, and Melanie Schmidt's group for insightful discussions.}

\nolinenumbers 

\usepackage{graphicx}
\usepackage{subcaption}
\usepackage{amssymb}
\usepackage{amsmath}
\usepackage{xcolor}
\usepackage{algorithm}
\usepackage{algorithmicx}
\usepackage{algpseudocode}
\algtext*{EndFor}
\algtext*{EndWhile}
\algtext*{EndIf}
\algtext*{EndProcedure}

\DeclareMathOperator*{\argmax}{arg\,max}
\DeclareMathOperator*{\abs}{abs}

\newtheoremstyle{problemstyle}  
        {3pt}                                               
        {3pt}                                               
        {\normalfont}                               
        {}                                                  
        {\bfseries}                 
        {\normalfont\bfseries:}         
        {.5em}                                          
        {}                                                  
\theoremstyle{problemstyle}

\newtheorem{ntheorem}{Theorem}

\begin{document}

\maketitle

\begin{abstract}
The $k$-Maximum Dispersion Problem with Cardinality Constraints ($k$-MDCC) asks for a partition of a given item set with pairwise dissimilarities into $k$ cardinality-constrained groups such that the minimum pairwise intra-group dissimilarity, which is also known as the dispersion, is maximized. The problem arises in the context of anticlustering, where the goal is to create maximally heterogeneous groups of items with applications in psychological research, bioinformatics, and data science.

It is known that $k$-MDCC is NP-hard for $k \geq 3$ but it has been an open question whether it can be solved in polynomial time for $k = 2$. We give a positive answer to this question by showing that $2$-MDCC can be solved by a quadratic number of cardinality-constrained 2-coloring problem instances ($2$-COLCC).  We solve these instances by transforming them into a restricted class of subset sum instances. Although subset sum is NP-complete in general, for this restricted class the input values are bounded, ensuring that the pseudopolynomial dynamic programming algorithm runs in polynomial time. As a consequence, we obtain a polynomial-time algorithm for $2$-MDCC.

We demonstrate that a publicly available open-source implementation of our new algorithm outperforms the previous integer linear programming solution by several orders of magnitude so that even large datasets ($n = 10{,}000$) can be processed in less than a second.
\end{abstract}

\section{Introduction}\label{sec1}
Contrary to in its better known inverse field of clustering, the goal of \emph{anticlustering} is to partition a set of items into groups such that items within the same group are as dissimilar as possible, while the groups themselves remain as similar as possible \cite{Valev1983,Spaeh1986}. Practical applications arise, for example, in experimental psychology when assigning subjects to experimental and control groups, in the generation of exams of equivalent characteristics, in sample allocation to minimize batch effects, or in creating similar training splits in supervised machine learning. 

A common objective in anticlustering is to maximize dispersion, which is defined as the minimum pairwise dissimilarity between items within the same group. Likewise, many applications require cardinality restrictions on the groups. Here, we therefore focus on $2$-MDCC, an important special case of the $k$-Maximum Dispersion Problem with Cardinality Constraints ($k$-MDCC)~\cite{FERNANDEZ2013721,Brusco2020,papenbergExtendingBicriterionApproach2025a}, which we formalize as follows:

\noindent\rule{\textwidth}{0.4pt}
\textsc{$k$-Maximum Dispersion Problem With Cardinality Constraints ($k$-MDCC)}
\begin{itemize}
\item \textbf{Input:}
  A set of items $I$ consisting of $n$ datapoints with $m$ features, a dissimilarity function $d: \mathbb{R}^m \times \mathbb{R}^m \rightarrow \mathbb{R}$, the number of groups $k$, and cardinality constraints $c_1, c_2, \ldots, c_k$ with $\sum_{g=1}^{k} c_g = n$
\item \textbf{Output:} A partition $\pi^*: I \rightarrow \{1,\ldots, k\}$ with
\begin{align*}
    \pi^* = \argmax_\pi \;& \min_{g \in \{1,\ldots,k\}} \min_{i,j \in I: \pi(i) = \pi(j) = g} d(i, j) & \text{(maximize dispersion)}\\
\text{subject to } \;& |\{i \mid i \in I: \pi(i)=g\}| = c_g \ & \forall g \in \{1,\ldots,k\}
\end{align*}
\end{itemize}\vspace*{-3ex}
\noindent\rule{\textwidth}{0.4pt}

Fernández et al.~\cite{FERNANDEZ2013721} showed that $k$-MDCC is NP-hard under balanced cardinality constraints, which is a special case of the general $k$-MDCC problem with arbitrary cardinality constraints. This implies that the general problem is also NP-hard. 
However, their result does not explicitly address the complexity for the case $k = 2$, an open question that has been raised by a reviewer of~\cite{papenbergExtendingBicriterionApproach2025a}.

The \textsc{OptDispF} algorithm~\cite{papenbergExtendingBicriterionApproach2025a} uses the Euclidean distance as the dissimilarity function~$d$ and solves $k$-MDCC by iterating over all pairwise distances in ascending order. In each iteration, the problem is reduced to a $k$-coloring problem with cardinality constraints, while all remaining steps are solvable in polynomial time. The case $k = 2$ is nontrivial: while classical $2$-coloring can be solved in polynomial time using a breadth-first search (BFS), it is unclear whether this remains true once cardinality constraints are imposed. We formalize the corresponding decision problem as follows:

\noindent\rule{\textwidth}{0.4pt}
\textsc{2-Coloring Problem With Cardinality Constraints (2-COLCC)}
\begin{itemize}
\item \textbf{Input:} A graph $G = (V,E)$ and cardinality constraints $c_1, c_2$ with $c_1 + c_2 = |V|$
\item \textbf{Question:} Does there exist a coloring $c: V \rightarrow \{1,2\}$ such that:
\begin{enumerate}
    \item $c(u) \neq c(v) \ \forall \{u, v\} \in E$, and
    \item $|\{v \in V \mid c(v) = 1\}| = c_1$ and $|\{v \in V \mid c(v) = 2\}| = c_2$?
\end{enumerate}
  \end{itemize}\vspace*{-1ex}
\noindent\rule{\textwidth}{0.4pt}

In the following, we refer to $|\{v \in V \mid c(v) = 1\}|$ and $|\{v \in V \mid c(v) = 2\}|$ as color class~1~and~2, respectively. 

The rest of this paper is organized as follows: Section~\ref{sec2} presents the main result of this paper, a polynomial-time algorithm for $2$-MDCC as well as an engineered variant that shows significant runtime improvement in practice. In Section~\ref{sec:comp-results} we demonstrate the computational benefits of implementations of our algorithms over the previous integer linear programming solution. Section~\ref{sec:discussion} discusses possible improvements. We conclude with a short summary and outlook in Section~\ref{sec:conclusions}.

\section{Algorithms for $2$-MDCC}\label{sec2}
In this section we give polynomial-time algorithms for $2$-MDCC. We start by describing the base algorithm in Section~\ref{sec:base-algorithm} and present a practical improvement in Section~\ref{sec:algor-engin}.

\subsection{Base algorithm}
\label{sec:base-algorithm}

Our approach adopts the framework of the \textsc{OptDispF} algorithm from~\cite{papenbergExtendingBicriterionApproach2025a}, restricted to $k=2$ as outlined in Algorithm~\ref{alg1}. \textsc{OptDispF} is inspired by an algorithm for dispersion-based bipartitioning~\cite{Brucker1978}. At a high level, it iteratively increases a distance threshold and updates the edge set of a growing graph accordingly in each iteration. It then solves the $k$-COLCC subproblem in this graph using an integer linear programming (ILP) formulation. In this work, for the case $k = 2$, we replace the previously used ILP formulation with a dedicated combinatorial algorithm. This algorithm transforms the cardinality-constrained coloring problem into a bounded subset sum problem, which is solved by dynamic programming (DP).

\begin{algorithm}[htbp]
  \begin{algorithmic}[1]
\Statex \textbf{Input:} An instance $(I, d, c_1, c_2)$ of $2$-MDCC
\Statex \textbf{Output:} Optimal dispersion $\delta^*$ and anticluster assignment $\pi^*: I \rightarrow \{1, 2\}$ respecting the cardinality constraints
\Statex
\Procedure{OptDispF$^{\textsc{2-COLCC}}$}{$I, d, c_1, c_2$}
	\State compute dissimilarities $d(i,j)$ between all pairs $i, j \in I$ and store them in an array~$D$
	\State sort $D$ in ascending order
        \State initialize a graph $G = (I, E = \emptyset)$
	\State initialize an array $\pi$ with $|I|$ entries\Comment{assignment of items to anticluster 1 or 2}
	\For{\textbf{each} $\delta$ in $D$}
		\State add all edges $\{u, v\}$ to $E$ with $d(u, v) = \delta$
                \State $\pi' \gets$ \textsc{solve\_2-COLCC}($G$, $c_1$, $c_2$)
		\If{$\pi'$ is \textrm{false}} \textbf{break} \EndIf
		\State $\pi \gets \pi'$
    \EndFor
	\State \Return $\delta$, $\pi$ \Comment{$\delta$ is dispersion, $\pi$ is anticluster assignment}
\EndProcedure
\end{algorithmic}
\caption{OptDispF$^\text{2-COLCC}$ algorithm adapted from~\cite{papenbergExtendingBicriterionApproach2025a}.}
\label{alg1}
\end{algorithm}%

The key idea of our algorithm for cardinality-constrained 2-coloring is to exploit the structure of the bipartite components induced by the current graph of each iteration. Each connected component admits a bipartition, and the two parts can be assigned to the two color classes in either orientation. This reduces the problem to deciding, for each component, which side contributes to color class~1 and which to color class~2, subject to the cardinality constraints.

To simplify this decision, we do not track the sizes of both parts of the partition separately. Instead, we consider only their difference, which uniquely determines the effect of choosing one orientation over the other. Thus, each component contributes a single value encoding the imbalance between its two parts, and the problem reduces to deciding whether the larger part of the partition is assigned to color class~1 or not. Algorithm~\ref{alg2} gives the details.

\begin{algorithm}[H]
\begin{algorithmic}[1]
\Statex \textbf{Input:} An instance $(G, c_1, c_2)$ of $2$-COLCC 
\Statex \textbf{Output:} Either a coloring $c: V \rightarrow \{1, 2\}$ for each vertex respecting the cardinality constraints or \textrm{false} if no such coloring exists
\Statex
\Procedure{\textsc{solve\_2-COLCC}}{$G=(V,E)$, $c_1$, $c_2$}
	\If{$G$ is not bipartite} \Return \textrm{false} \EndIf
	\State $S \gets [\ ]$
	\State initialize coloring $c$ as an array with $|V|$ empty entries
	\State determine all $\ell$ connected components $C_1, C_2, C_3, \ldots, C_\ell$ in $G$
	\For{$i \gets 1, \ldots, \ell$}
        \State $P_i = Q_i = \emptyset$
    	\State \parbox[t]{\dimexpr\linewidth-\algorithmicindent-\algorithmicindent\relax}{perform a BFS on $C_i$ and assign the vertices discovered at even levels to $P_i$ and the vertices discovered at odd levels to $Q_i$}\vspace*{.25em}
		\State $S\textsc{.append}(\abs(|P_i| - |Q_i|))$\Comment{add size difference of $P_i$ and $Q_i$ to $S$}
    \EndFor
	\State $c_1' \gets c_1 - \sum_{i=1}^\ell \min\{|P_i|, |Q_i|\}$
	\State $c_2' \gets c_2 - \sum_{i=1}^\ell \min\{|P_i|, |Q_i|\}$
	\State $T \gets$ \textsc{SubsetSumTableFor2-COLCC}($S$)
	\If{$T[c_1']=\textrm{false}$} \Return \textrm{false}\EndIf
	\State{$j \gets c_1'$}
	\While{$j \neq 0$}
		\State $(i, s) \gets T[j]$
		\If{$|P_i| \geq |Q_i|$}
		\State color $P_i$ with Color 1 and $Q_i$ with Color 2, store colors in $c$ accordingly
		\Else{} \State color $P_i$ with Color 2 and $Q_i$ with Color 1, store colors in $c$ accordingly
		\EndIf
		\State $j \gets j - s$
	\EndWhile
	\State \Return $c$
\EndProcedure
\end{algorithmic}
\caption{2-COLCC algorithm. Lines 2--13 perform the feasibility check for the existence of a 2-coloring respecting the cardinality constraints, while Lines 14--21 implement the backtracking procedure that assigns the colors.}
\label{alg2}
\end{algorithm}%


The decision whether the larger part of a partition is assigned to either color class 1 or 2 is realized by a subset sum DP procedure (Algorithm~\ref{alg3}), where the input multiset consists of the imbalance values. While subset sum is NP-hard in general, here the standard pseudopolynomial DP runs in polynomial time because the total sum of all imbalance values is bounded by $n=|V|$. 
 Figure~\ref{fig1} shows an illustrative example of all steps of the $2$-COLCC algorithm.

\begin{figure}
\centering
\includegraphics[width=\textwidth]{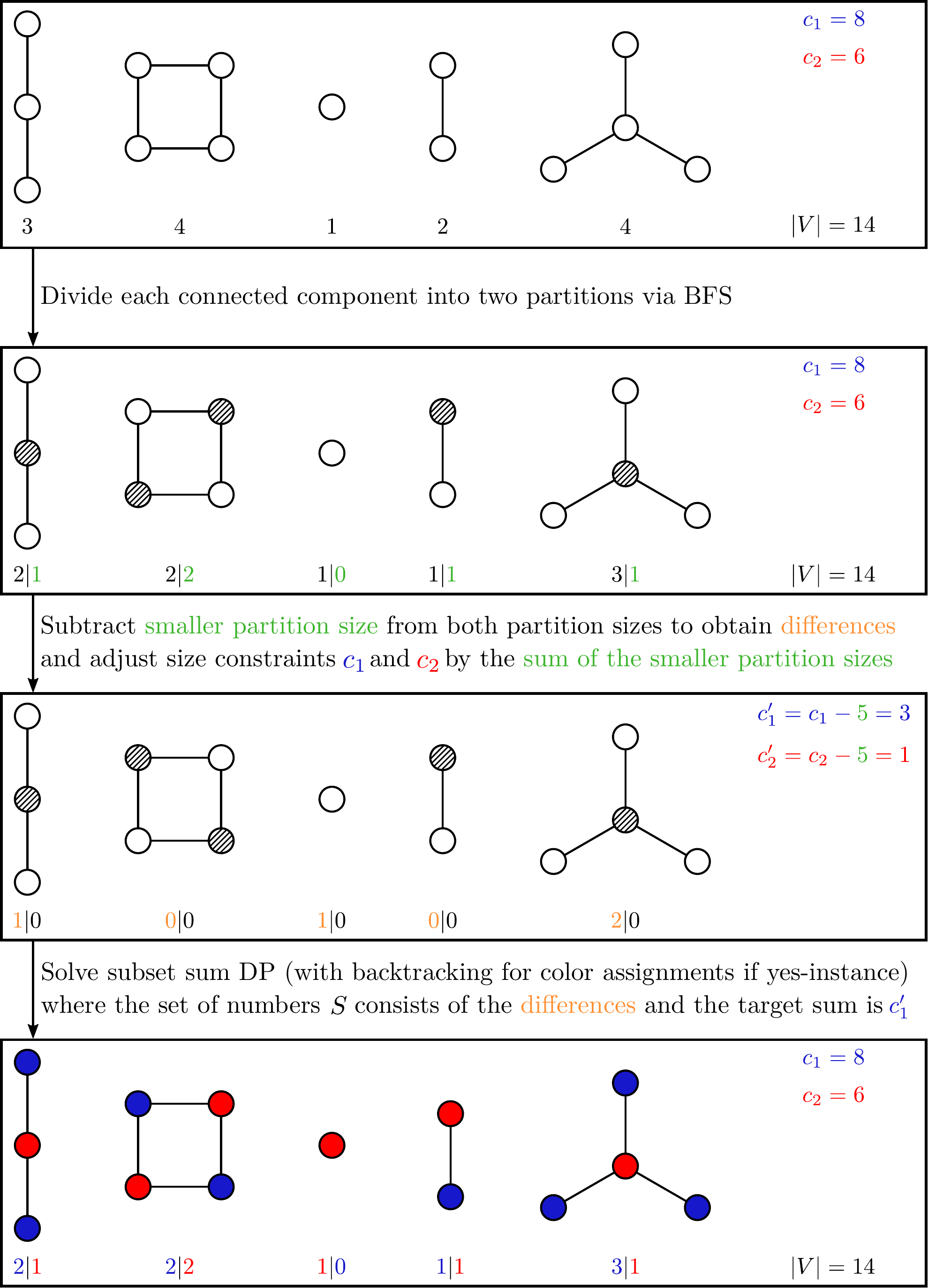}
\caption{2-COLCC algorithm. In the first panel, the number of vertices for each connected component is shown. Color~1 is blue, color~2 is red. A BFS divides each connected component into two partitions (plain and striped) that can be assigned different colors. Next, instead of the partition sizes themselves, only their differences are considered (yellow numbers). These are the input for the last step, which is a subset sum DP, with one of the adjusted cardinality constraints being the target value, where we can always choose $c_1'$. In this example, the yellow numbers of the left-most and right-most components were selected as the subset whose sum equals $c_1'$, resulting in the corresponding components having their larger partitions (plain) colored blue (as $c_1'$ is the target value), while the unchosen middle component has its larger partition (plain) colored red. The color assignment to the plain and striped partition for the other connected components is arbitrary because the partition sizes are equal.}
\label{fig1}
\end{figure}

We obtain the following main result:

\begin{ntheorem}
  Algorithm OptDispF$^{\textsc{2-COLCC}}$ computes an optimal solution of $2$-MDCC for $n$ items in time $\mathcal{O}(n^4)$ and space $\mathcal{O}(n^2)$. 
\end{ntheorem}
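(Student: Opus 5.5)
The proof splits into three parts: correctness of \textsc{solve\_2-COLCC}, correctness of the outer threshold loop, and the complexity analysis.

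\emph{Correctness of \textsc{solve\_2-COLCC}.} If $G$ is not bipartite, no proper 2-coloring exists at all, so returning false is correct. Otherwise, each connected component $C_i$ has a unique bipartition $\{P_i,Q_i\}$ up to swapping the two sides. Hence every proper 2-coloring colors $P_i$ entirely with one color and $Q_i$ entirely with the other.

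Let $m_i=\min\{|P_i|,|Q_i|\}$ and $s_i=\bigl||P_i|-|Q_i|\bigr|$. Whatever the orientation, component $C_i$ contributes $m_i$ vertices to color class~1. It contributes an additional $s_i$ vertices exactly when its larger side receives color~1. Therefore a valid coloring exists if and only if some subset $J\subseteq\{1,\dots,\ell\}$ satisfies $\sum_{i\in J}s_i=c_1-\sum_i m_i=c_1'$.

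Color class~2 is then automatically correct. Its size is $|V|-c_1=c_2$, and this equals $\sum_i m_i+\sum_{i\notin J}s_i$, so $c_2'$ is consistent.

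I will state that \textsc{SubsetSumTableFor2-COLCC} (Algorithm~\ref{alg3}) is the standard subset-sum DP. It stores, for each reachable sum $j$, a predecessor pair $(i,s_i)$ in which every index $i$ is used at most once. Under that assumption, the backtracking recovers such a $J$. Components not visited by the backtracking are given the opposite orientation, with the larger side colored~2. This requires filling the remaining entries of $c$; I would note this explicitly.

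\emph{Correctness of the outer loop.} Let $G_\delta$ be the graph whose edges are the pairs with $d(u,v)\le\delta$. A partition has dispersion $>\delta$ if and only if it is a cardinality-constrained proper 2-coloring of $G_\delta$. Feasibility is monotone: a coloring valid for $G_\delta$ is also valid for $G_{\delta'}$ whenever $\delta'<\delta$.

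Suppose the loop breaks at threshold $\delta$. Then the stored $\pi$ is a valid coloring of the previous graph $G_{\delta^-}$, so every intra-group distance is at least $\delta$. No partition has dispersion larger than $\delta$, since such a partition would color $G_\delta$ properly. Hence $\delta$ is optimal and $\pi$ attains it. Two edge cases need care. First, if the first call already fails, the initial $\pi$ is arbitrary; I would initialize it with any partition of the right sizes. Second, if no break occurs, the dispersion is unbounded, which happens when a group has size at most~1. I expect this careful handling of the boundaries and ties to be the only subtle point.

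\emph{Complexity.} Computing and sorting the $\mathcal{O}(n^2)$ distances costs $\mathcal{O}(n^2\log n)$. There are at most $\mathcal{O}(n^2)$ iterations. In each iteration, the bipartiteness check, the BFS and the components take $\mathcal{O}(n+|E|)=\mathcal{O}(n^2)$. The DP has table size $c_1'+1\le n+1$ and processes $\ell\le n$ items, but $\sum_i s_i\le n$. So the DP takes $\mathcal{O}(\ell\cdot n)=\mathcal{O}(n^2)$, and backtracking takes $\mathcal{O}(n)$. The total time is therefore $\mathcal{O}(n^2)\cdot\mathcal{O}(n^2)=\mathcal{O}(n^4)$. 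Space is dominated by $D$ and $E$, each $\mathcal{O}(n^2)$; the DP table uses $\mathcal{O}(n)$.
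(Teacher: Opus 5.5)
Your complexity count is the paper's entire proof: at most $\mathcal{O}(n^2)$ thresholds, each followed by a 2-COLCC call costing $\mathcal{O}(n^2)$ because the graph may have quadratically many edges, and space dominated by the stored distances. The rest of your proposal goes beyond the paper's proof, which has no correctness argument; the surrounding text only sketches one. Your characterization of dispersion via $G_\delta$, the reduction to subset sum over the imbalances $s_i$ with target $c_1'$, and your two repairs fill that gap. The repairs are initializing $\pi$ to a feasible partition and coloring the components the backtracking never visits.

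The step you leave ``under that assumption'' needs more than that, because the assumption is false for Algorithm~\ref{alg3} as printed. The update $T[j]\gets(i,s)$ runs whenever $T[j-s]$ is reachable, even if $T[j]$ is already set, so later items overwrite earlier predecessors. Two things go wrong:
\begin{itemize}
\item For $S=(1,1)$ you get $T[1]=T[2]=(2,1)$. Backtracking from $2$ orients component~2 twice and never touches component~1.
\item A balanced component ($s_i=0$) overwrites every reachable entry with $(i,0)$. If it is processed last and $c_1'>0$, the update $j\gets j-0$ never terminates. An isolated vertex listed before a single edge, with $c_1=2$, $c_2=1$, already triggers this; this case arises in the first iteration of Algorithm~\ref{alg1} on three collinear points.
\end{itemize}
Reachability, and hence the feasibility decision, is still computed correctly; only the reconstruction breaks.

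The fix is to write $T[j]$ only while it is still \textrm{false}. Then $T[j]=(i,s)$ certifies that $j-s$ was reachable using items $1,\ldots,i-1$ alone. So the index stored in $T[j-s]$ is strictly smaller than $i$, the backtracking visits pairwise distinct components in at most $\ell$ steps, and your argument goes through. State this modification explicitly. Also add a guard declaring $c_1'<0$ or $c_1'>\sum_i s_i$ infeasible, since $T$ has only $\sum_i s_i+1$ entries, not $c_1'+1$.

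One small slip: no break can occur only if the final graph $K_n$ is bipartite, i.e.\ $n\le 2$ with $c_1=c_2=1$. So \emph{every} group must have size at most~1, not just one of them.
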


\begin{proof}
The main loop in Algorithm~\ref{alg1} is executed up to $\mathcal{O}(n^2)$ times while each execution of the constrained 2-coloring step takes time $\mathcal{O}(n^2)$ as the graphs can have a quadratic number of edges. The space requirement is dominated by storing the $\mathcal{O}(n^2)$ distances.
\end{proof}

\begin{algorithm}[H]
\begin{algorithmic}[1]
\Statex \textbf{Input:} A multiset of numbers $S$
\Statex \textbf{Output:} A table $T$ enabling backtracking the summands of all possible sums $[0,1,2,\ldots,\sum_{s \in S} s]$, where $T[x] = \textrm{false}$ for impossible sums $x$
\Statex
\Procedure{\textsc{SubsetSumTableFor2-COLCC}}{$S$, $c_1'$}
    \State $n \gets \sum_{s \in S} s$
	\State initialize DP table $T[0,1,2,\ldots,n]$ with \textrm{false}
	\State $T[0] \gets (0, 0)$
    \For{$i \gets 1,\ldots,|S|$}
		\State $s \gets S[i]$
		\For{$j \in [n, n-1, n-2, \ldots, s]$}
	        \If{$T[j-s]$ is not \textrm{false}}
				\State $T[j] \gets (i, s)$
			\EndIf
		\EndFor
    \EndFor
	\State \Return $T$
\EndProcedure
\end{algorithmic}
\caption{Standard subset sum DP for 2-COLCC.}
\label{alg3}
\end{algorithm}

\subsection{Algorithm Engineering}
\label{sec:algor-engin}

In our experiments, we use the Euclidean distance as the dissimilarity function between datapoints. The maximum dispersion is often attained within the first few iterations, likely because the underlying geometric structure of the Euclidean space typically tends to induce a large number of non 2-colorable triangles~\cite{BRINGMANN201935}. Generally, we observe an empirical upper bound of $|V|$ distances until an odd cycle forms. Of course, this is no guaranteed theoretical upper bound as the number of edges of a bipartite graph can be up to quadratic in the size of the vertex set. Nevertheless, we can use this observation to speed up the algorithm in practice.    

Since sorting all distances dominates the runtime of Algorithm~\ref{alg1} in our experiments, we focus on the $|V|$ smallest distances unless the maximum dispersion is not found among them. To this end, we compute distances in a streaming fashion to avoid storing the full distance matrix, while maintaining a max-heap of size $|V|$, into which new distances are inserted whenever they are smaller than the current maximum (Algorithm~\ref{alg4}). However, even if the maximum dispersion is found among the $|V|$ smallest distances, the asymptotic worst-case runtime still does not improve upon the $\mathcal{O}(n^2 \log n)$ worst-case runtime when sorting all distances, while---as demonstrated in the next section---the practical runtime improves and the space complexity is reduced from $\mathcal{O}(n^2)$ to $\mathcal{O}(n)$.

\begin{algorithm}[t]
  \begin{algorithmic}[1]
    \Statex \textbf{Input:} An instance $(I, d, c_1, c_2)$ of $2$-MDCC
\Statex \textbf{Output:} Optimal dispersion $\delta^*$ and anticluster assignment $\pi^*: I \rightarrow \{1, 2\}$ respecting the cardinality constraints
\Statex
\Procedure{OptDispF$^{\textsc{2-COLCC}}_{\textsc{HEAP}}$}{$I$, $d$, $c_1$, $c_2$}
	\State {initialize empty max-heap $D$ with capacity $|I|$}
	\State {\textbf{for each} distance $\delta$ in stream of distances between all $i \in I$ \textbf{do}}
	\State\hspace*{\algorithmicindent}{\textbf{if} $|D| = |I| \textbf{ and } \delta < D.\textsc{max}()$ \textbf{then}}
	\State \hspace*{\algorithmicindent}  \hspace*{\algorithmicindent}  { $D$.\textsc{extract-max()}}
	\State\hspace*{\algorithmicindent}\textbf{if} $|D| < |I|$ \textbf{then} {$D$.\textsc{insert($d$)}}
	\State {sort $D$ in ascending order with heap sort}
        \State initialize a graph $G = (I, E = \emptyset)$
	\State initialize an array $\pi$ with $|I|$ entries
	\For{\textbf{each} $\delta$ in $D$}
		\State add all edges $\{u, v\}$ to $E$ with $d(u, v) = \delta$
    	\State $\pi' \gets$ \textsc{solve\_2-COLCC}($G$, $c_1$, $c_2$)
		\If{$\pi'$ is \textrm{false}} \textbf{break} \EndIf
		\State $\pi \gets \pi'$
    \EndFor
	\If{$\pi'$ is not \textrm{false}} \Comment{if maximum dispersion not within $|I|$ smallest distances}
	\State {$\delta$, $\pi \gets$ \textsc{OptDispF$^\textsc{2-COLCC}$($I, d, c_1, c_2$)}}
	\EndIf
	\State \Return $\delta$, $\pi$ \Comment{$\delta$ is dispersion, $\pi$ is anticluster assignment}
\EndProcedure
\end{algorithmic}
\caption{\textsc{OptDispF$^\textsc{2-COLCC}_\textsc{HEAP}$} algorithm with the max-heap extension. Instead of sorting the distances, which takes up most of the time, only the $|V|$ smallest distances are regarded. If the maximum dispersion is not found within the $|V|$ smallest distances, \textsc{OptDispF$^\textsc{2-COLCC}$} is used afterwards.}
\label{alg4}
\end{algorithm}%
\newpage
\section{Computational Results}
\label{sec:comp-results}
Experiments were conducted on a workstation running Ubuntu~24.04~(64-bit). The system was equipped with an Intel~Core~i9-12900K~CPU and 31~GiB of RAM. The ILP solver used is Gurobi 13.0.1. All R and C++ code used in the
simulation study can be retrieved from the accompanying public GitHub repository
\url{https://github.com/AlBi-HHU/MDCC_2-COLCC}.

We varied the sample size $n$ in increments of $1{,}000$ up to $10{,}000$, while keeping the number of feature variables fixed at $m=2$ and the cardinality constraints at $c_1=\frac{n}{2}$ and $c_2=\frac{n}{2}$. All data were generated from a normal distribution with mean 0 and standard deviation 1.

The experiments comparing the ILP approach with the new algorithm without the max-heap optimization were repeated ten times to estimate average runtime and memory usage, where the new algorithm significantly outperforms the ILP approach in both metrics and exhibits more stable runtime behavior. See Figure~\ref{fig2}. Notably, the runtime is reduced from around a minute to a few seconds on average per run.

With the max-heap optimization, the runtime is even further reduced to well below a second, while the memory usage shows almost imperceptible linear growth rather than quadratic growth. See Figure~\ref{fig3}.

\begin{figure}[H]
\centering

    \begin{subfigure}{0.5\textwidth}
        \centering
        \includegraphics[width=\textwidth]{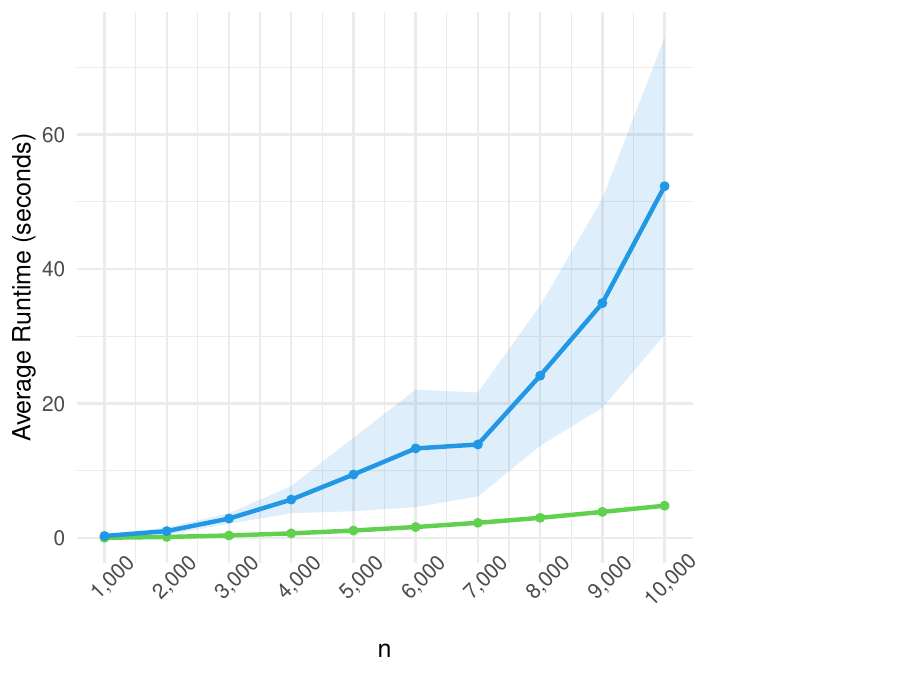}
    \end{subfigure}%
    \begin{subfigure}{0.5\textwidth}
        \centering
        \includegraphics[width=\textwidth]{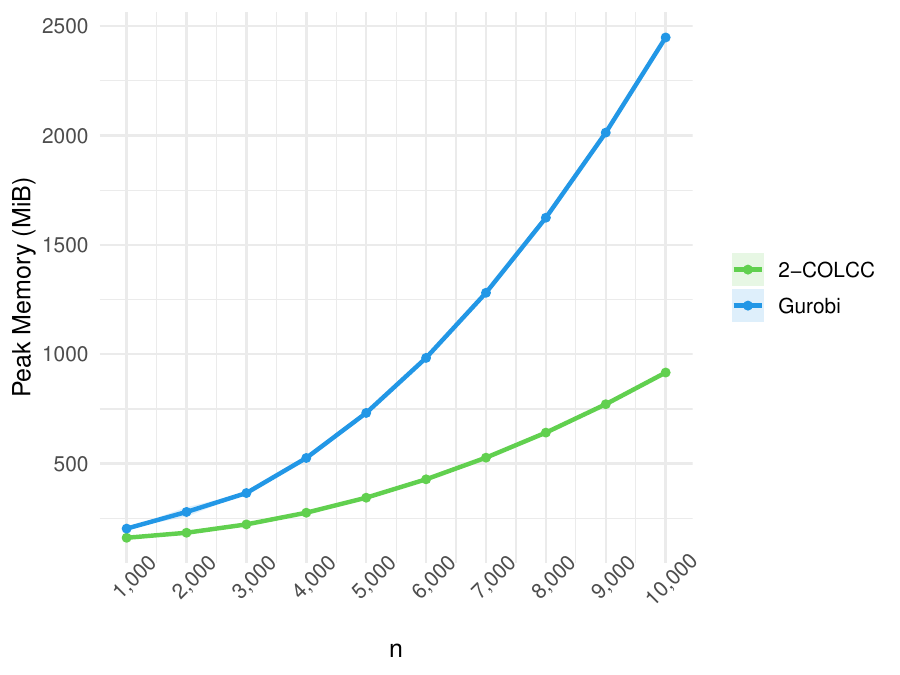}
    \end{subfigure}
\caption{Comparison of runtime and memory usage between the ILP solver (Gurobi) and the new approach. The shaded area indicates the standard deviation over 10 runs.}
\label{fig2}
\end{figure}

\begin{figure}[H]
\centering

    \begin{subfigure}{0.5\textwidth}
        \centering
        \includegraphics[width=\textwidth]{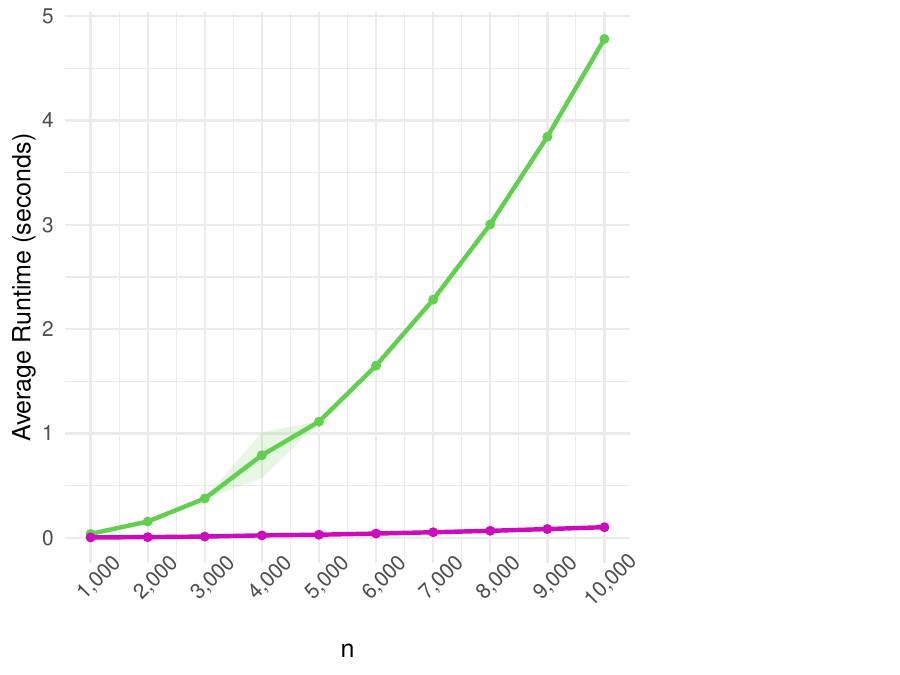}
    \end{subfigure}%
    \begin{subfigure}{0.5\textwidth}
        \centering
        \includegraphics[width=\textwidth]{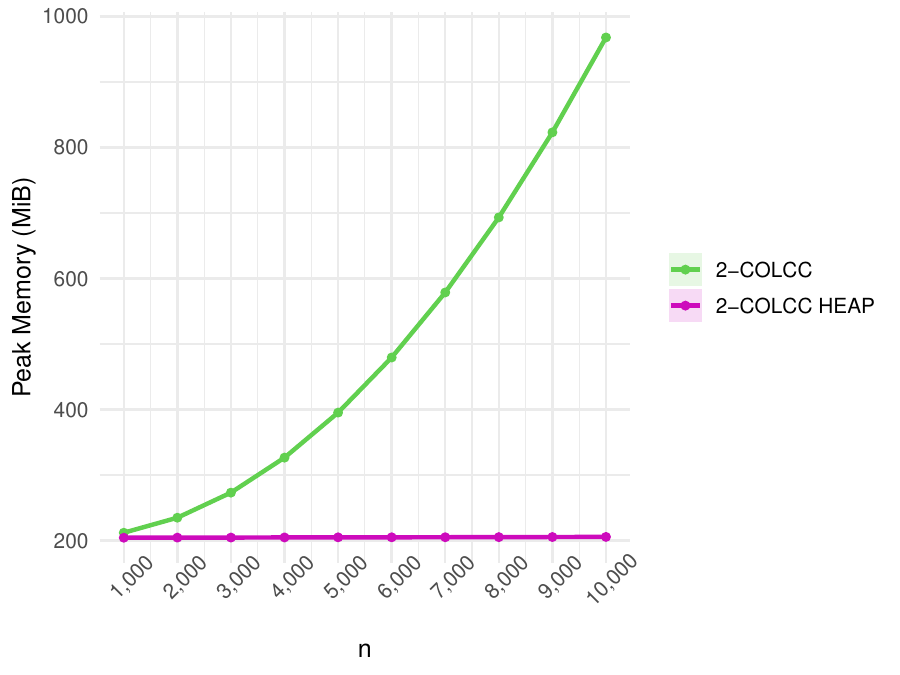}
    \end{subfigure}
\caption{Comparison of runtime and memory usage between \textsc{OptDispF}$^\text{2-COLCC}_\textsc{HEAP}$ and \textsc{OptDispF}$^\text{2-COLCC}$. The shaded area indicates the standard deviation over 10 runs.}
\label{fig3}
\end{figure}

\section{Discussion}\label{sec:discussion}


In theory, the $\mathcal{O}(n^2 \log n)$ runtime for selecting the $n$ smallest among all $\mathcal{O}(n^2)$ pairwise distances in Algorithm~\ref{alg4} could be improved by replacing the max-heap with a median-of-medians strategy. This strategy first identifies the $n$-th smallest distance using the median-of-medians algorithm~\cite{medianofmedians1973}, followed by a partitioning step to isolate the $n$ smallest distances, where both median-of-medians selection and partitioning can be performed in linear time with respect to the input size, i.e., the number of distances $\mathcal{O}(n^2)$ in our case. These $n$ smallest distances can then be sorted in $\mathcal{O}(n \log n)$ time, yielding a runtime complexity of $\mathcal{O}(n^2)$ for selecting the $n$ smallest distances, which improves upon the heap-based approach with a runtime of $\mathcal{O}(n^2 \log n)$.

However, our implementation of this approach did not yield performance improvements due to the large constant factors associated with the median-of-medians procedure. Moreover, the memory-efficient streaming formulation used in the heap-based variant requiring only $\mathcal{O}(n)$ space is not applicable here, as the median-of-medians approach requires the entire set of $\mathcal{O}(n^2)$ distances at the same time.

Interesting future research includes a polyhedral study of the linear programming relaxation of the ILP for the case $k = 2$ to check for total unimodularity or other interesting properties. We also find it an intriguing question to either refine the runtime analysis of the presented algorithm to obtain a tighter bound than $\mathcal{O}(n^4)$, because a graph constructed by adding edges in ascending order of Euclidean distance may never produce a bipartition with $\Theta(n^2)$ edges, or to come up with alternative, faster algorithms.

\section{Conclusions}\label{sec:conclusions}
We studied $k$-MDCC for $k=2$ and showed that the problem is solvable in polynomial time. This follows from the fact that the corresponding decision problem 2-COLCC is in P, as established in this work. Hence, $k$-MDCC is polynomial-time solvable in the case $k=2$, in contrast to the NP-hardness for $k \geq 3$.

This result may be useful in applications such as large-scale anticlustering problems, where exact solutions are required for large datasets. More broadly, it may be relevant for variants of related coloring and partitioning problems that arise in future applications, for which our approach suggests potential polynomial-time solvability in restricted settings.

\bibliography{2coloring_cardinality_constraints.bib}

\appendix
\renewcommand{\thefigure}{S\arabic{algorithm}}
\setcounter{figure}{0}

\end{document}